\documentclass[a4paper,UKenglish]{lipics-v2021}
\usepackage{booktabs}
\title{Undecidability of Reachability for Constraint Automata on Words}
\titlerunning{Undecidability of Reachability for Constraint Automata}
\author{Jakub Michaliszyn}{University of Wrocław}{}{}{}
\authorrunning{Jakub Michaliszyn}
\Copyright{Jakub Michaliszyn}
\ccsdesc[500]{Theory of computation~Formal languages and automata theory}
\ccsdesc[300]{Theory of computation~Logic and verification}
\keywords{constraint automata, reachability, prefix, suffix, infix, undecidability}
\hideLIPIcs

\newcommand{\pref}{\sqsubset_P}
\newcommand{\suff}{\sqsubset_S}
\newcommand{\incomp}{\perp_I}
\newcommand{\wpref}{\sqsubseteq_P}
\newcommand{\wsuff}{\sqsubseteq_S}
\newcommand{\winfix}{\sqsubseteq_I}
\newcommand{\word}[1]{\mathtt{#1}}
\newcommand{\step}{\xrightarrow{U}}

\begin{document}
\nolinenumbers
\maketitle

\begin{abstract}
We prove that reachability for constraint automata over finite binary
words is undecidable. Eight registers suffice, with conjunctive guards
using only equality, proper prefix, proper suffix, and infix
incomparability, without word constants or concatenation.
Finite-run nonemptiness, B\"uchi nonemptiness, and bounded B\"uchi
nonemptiness are undecidable as consequences.
For Boolean guards over equality, prefix, suffix, and prefix and
suffix incomparability, B\"uchi nonemptiness is PSPACE-complete
with arbitrarily many registers and word constants.
The upper bound also allows positive infix atoms and yields
PSPACE-completeness of prefix--suffix LTL satisfiability.
\end{abstract}
\newpage
\section{Introduction}
\label{sec:introduction}

Constraint automata and related constraint temporal logics have been
studied over numerical and ordered
domains~\cite{Cerans1994,Revesz2000,Revesz2002,DemriDSouza2007,DemriPeriodicity2006,DemriGascon2008,DemriGascon2009,Gascon2009,SegoufinTorunczyk2011},
tree
domains~\cite{CarapelleKartzowLohrey2016,CarapelleFengKartzowLohrey2017,DemriQuaasTrees2025,DemriGu2026},
and
words~\cite{KartzowWeidner2015,Weidner2016,DemriDeters2016,PetelerQuaas2022,DemriQuaasStrings2023,MichaliszynOtopWieczorek2023,MichaliszynOtopWieczorek2026},
with related synthesis
results~\cite{ExibardFiliotKhalimov2022,DauvierFiliotReynier2026}
and
surveys~\cite{DemriOverview2006,CarapelleLohrey2015,DemriQuaasSurvey2021}.
A recurring idea in decidability proofs is to replace concrete values
by symbolic comparison information and determine which symbolic runs
can be realised.

We establish two results over finite binary words.
Reachability is undecidable with eight registers and conjunctive guards
using equality, proper prefix, proper suffix, and infix incomparability,
without word constants. Finite-run nonemptiness and both ordinary
and bounded B\"uchi nonemptiness are therefore undecidable.
For Boolean prefix and suffix guards, including equality and
incomparability, B\"uchi nonemptiness is PSPACE-complete with
arbitrarily many registers and arbitrary word constants.
This upper bound also allows positive infix atoms and gives
PSPACE-completeness of prefix--suffix LTL satisfiability.

For binary words, the decidability result answers the question about
combining prefix and suffix posed by Demri and
Deters~\cite[Section~5]{DemriDeters2016} and reiterated by Peteler
and Quaas~\cite[Section~1]{PetelerQuaas2022}, including the extensions
to several variables and arbitrary word constants.
It also resolves the unrestricted B\"uchi nonemptiness question
for the infix-positive fragment left open by Michaliszyn, Otop,
and Wieczorek~\cite[Section~7]{MichaliszynOtopWieczorek2026}.
Our undecidability result answers negatively their question about
extending reachability to infix
incomparability~\cite[Section~4]{MichaliszynOtopWieczorek2026}.\footnote{The CONCUR~2023 version
\cite{MichaliszynOtopWieczorek2023} claimed PSPACE upper bounds
also in the presence of infix incomparability.
The later preprint
\cite[Section~4.1]{MichaliszynOtopWieczorek2026}
identifies an error in the constraint-joining argument
and restricts its positive results to infix-positive constraints.}

\paragraph*{Results in a nutshell.}
The undecidability proof starts from a familiar idea: represent counter
values by words that grow with each increment.
The key is to obtain a well-defined counter value
from simple comparisons between words.
We define a relation on pairs of words using only the allowed guards.
Although its steps may increase word lengths by different amounts,
all paths with the same start and endpoint have the same number of
steps, as shown in Lemma~2.
Thus the number of steps from a fixed anchor gives a counter value,
independently of how the pair was reached.
Every step increases this value by exactly one.
A decrement can then be certified by finding a reachable predecessor
through a search from the anchor.
Neither the representation of a counter value nor its predecessor
needs to be unique.

For decidability, we separate the prefix and suffix requirements
by representing each word with two words of equal length.
The main step is a single global reconstruction that turns suitable
pairs into words while preserving the required relations and constants.
The choice to preserve relations in only one direction is essential:
distinct pairs with the same first coordinate may have to become
properly prefix-comparable after reconstruction.
The other direction is not needed in the proof as we construct the appropriate pairs directly.

\section{Model and proof strategy}
\label{sec:model}

Fix the alphabet $\Sigma=\{0,1\}$. Each register stores a word in $\Sigma^*$,
possibly the empty word. A constraint automaton has finitely many control
states and finitely many transitions. For $n$ registers, a transition guard is
a conjunction of atoms $uRv$, where
\[
u,v\in\{x_1,\ldots,x_n,x'_1,\ldots,x'_n\}.
\]
Unprimed variables denote values before the transition and primed variables
denote values after it. Comparisons between two new values are allowed.
Initial register values are arbitrary.

We use the signature $\{=,\pref,\suff,\incomp\}$. The relations $u\pref v$ and
$u\suff v$ mean that $u$ is a proper prefix and a proper suffix of $v$,
respectively. The relation $u\incomp v$ means that neither word is an infix of
the other, where an infix is a contiguous substring. In particular,
$u\incomp u$ never holds. For example,
$\word{01}\incomp\word{00}$, whereas
$\word{01}\not\incomp\word{101}$.
We write $\wpref$, $\wsuff$, and $\winfix$ for the corresponding containment
relations with equality allowed. They are used only in the proof and do not
enlarge the transition signature.

A configuration consists of a control state and an $n$-tuple of words. A
transition can be taken when its guard is satisfied by the old and new tuples.
\emph{Reachability} asks whether a designated control state occurs in some
finite run starting at an initial state.

\begin{theorem}
\label{thm:main}
Reachability of a designated state in constraint automata over finite binary
words is undecidable. Eight registers suffice, with no word constants and with
transition guards that are conjunctions of atoms over
$\{=,\pref,\suff,\incomp\}$.
\end{theorem}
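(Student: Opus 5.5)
We reduce from reachability for two-counter Minsky machines. Each counter is represented by a pair of registers $(a,b)$, and the two counters share an anchor pair $(a_0,b_0)$ that is fixed in the first transition and copied unchanged by equality atoms forever after. This accounts for two anchor registers, two pairs of counter registers, and two auxiliary registers used for decrement certification, eight in total. The core is a one-step relation $U$ on pairs of words, written $(a,b)\step(a',b')$, defined by a conjunction of atoms over $\{=,\pref,\suff,\incomp\}$. As a first attempt we take
\[
a\pref a',\qquad b\suff b',\qquad a'\incomp b,\qquad a\incomp b',
\]
possibly together with extra atoms linking $a'$ and $b'$. The intended effect is that $a$ grows on the right and $b$ on the left, while the incomparability atoms forbid the growth from being absorbed into the other component. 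The counter value of a pair $p$ is then the number of $U$-steps on a path from the anchor to $p$.

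\textbf{Well-definedness.} The key lemma (the paper's Lemma~2, which we prove first) states that any two $U$-paths with the same start and the same endpoint have the same length. We intend to prove it through an invariant that any single step increases exactly by one, independently of how many letters were appended. A natural candidate is the number of alternations or blocks that the incomparability constraints force at the junction between the old and new parts. Concretely, we would show that $\incomp$ prevents a step from adding "two steps' worth" of structure at once. Otherwise an intermediate pair would exist, and $a'\incomp b$ would be violated, because an infix relation would be created between the old and new words. Choosing the atoms so that this invariant really is additive is the main obstacle. If the naive relation above allows length-skipping steps, we would add atoms involving the anchor, for example $a_0\pref a'$ and $b_0\suff b'$, together with incomparability between new words and anchor words, to rigidify the structure.

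\textbf{Simulating the machine.} With counter value $d(p)$ defined, the operations are encoded as follows.
\begin{itemize}
\item Zero test: the counter pair equals the anchor pair, checked by two equality atoms.
\item Increment: the transition requires $(a,b)\step(a',b')$ for that counter, and the other counter's registers are copied by equality.
\item Decrement: this cannot be done in one guard, since we would need the pair that steps to the current one. We instead run a certification gadget. Nondeterministically guess a pair $(c,e)$ in the auxiliary registers and store it. Then, from the anchor, walk the auxiliary pair forward by $U$-steps, looping in a dedicated subautomaton, until it equals the guessed target. Because $U$ is strictly increasing in length, the walk cannot be cheated. The walk reconstructs an actual $U$-path, so the guessed pair is reachable from the anchor. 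A final step checks $(c,e)\step(a,b)$ and replaces $(a,b)$ by $(c,e)$.
\end{itemize}
The lemma guarantees $d(c,e)=d(a,b)-1$ regardless of which predecessor is chosen. We do not need uniqueness of representatives.

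During the walk, the guessed target must be held while the walking pair changes and the other counter must be preserved. The register budget is therefore tight, and we would reuse the anchor comparisons to limit the count to eight.

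\textbf{Correctness.} Soundness: every run of the automaton induces a run of the counter machine with values $d(\cdot)$, using the lemma at each step. Completeness: every counter-machine run is realised, which needs an explicit concrete $U$-chain from the anchor, for example $a_k=0(01)^k$ and $b_k=(10)^k1$ or a similar construction verified against the atoms. Every counter value is then realised, and predecessors exist along this chain. Undecidability follows from the undecidability of Minsky machine reachability.
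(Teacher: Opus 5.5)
Your reduction skeleton matches the paper's: an anchor chosen by the run and copied forever, counter value as $U$-path length from the anchor, decrement by a forward walk certified by a final $U$-step, and four register pairs. But the step that carries the whole proof is missing. You never commit to a relation $U$ or prove that path lengths are unique, and the relation you write down fails. From $(\word{0},\word{1})$ one has $(\word{0},\word{1})\step(\word{000},\word{111})$ and also $(\word{0},\word{1})\step(\word{00},\word{11})\step(\word{000},\word{111})$. Every atom $a\pref a'$, $b\suff b'$, $a'\incomp b$, $a\incomp b'$ holds, because unary words over different letters are never infixes of one another.

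The patches you mention ($a'\incomp b'$, or anchor atoms $a_0\pref a'$, $b_0\suff b'$, $a'\incomp b_0$, $b'\incomp a_0$ with anchor $(\word{0},\word{1})$) are satisfied by the same example. Since the run picks the anchor, the lemma must hold from every start pair. The defect is structural: your coordinates grow independently, so nothing records how many steps produced a given growth.

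The missing idea is the paper's cross-coupling: $U((a,b),(c,d))$ holds iff $a\pref c$, $b\suff c$, $a\suff d$, $b\incomp d$, together with $a\incomp b$ and $c\incomp d$. The new first word contains both old words, and suffix conditions switch coordinates at every step. Take two paths $A_m,\dots,A_0=C$ and $B_n,\dots,B_0=C$ from a common start with $m\le n$. Then $b_{i-1}$ and $c_i$ are suffixes of the same coordinate of $C$, hence suffix-comparable. Together with $c_i\pref c_{i-1}$ and $b_i\incomp b_{i-1}$, this yields $b_i\not\winfix c_i$ by induction. If $n>m$, then $b_m=d_n\suff c_{n-1}\wpref c_m$, i.e.\ $b_m\winfix c_m$, a contradiction. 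A vague alternation-counting invariant is no substitute for this relation and argument.

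There are two smaller problems. First, your decrement gadget keeps a guessed target while a second pair walks towards it. That needs four auxiliary registers, ten in total. Instead, let the walker itself be the candidate predecessor and exit with guard $U(S,A_i)\land A_i'=S\land S'=Z$, as the paper does. Second, your witness chain violates your own atoms. Since $\word{0}(\word{01})^{k+1}=\word{00}(\word{10})^k\word{1}$ ends with $b_k=(\word{10})^k\word{1}$, the atom $a'\incomp b$ fails already at $k=0$. Completeness needs a chain checked against the actual $U$, e.g.\ the paper's $X_{r+1}=X_rY_r$, $Y_{r+1}=X_rX_r$ from $(\word{0},\word{1})$. There, every length-$2^r$ infix of $X_rX_r$ is a cyclic shift of $X_r$, so $Y_r$, which has a different number of ones, cannot be one.
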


The proof has three parts. First, we define a relation $U$ on pairs of words and
prove that paths with the same start and end have the same length. We then
interpret counter values as path lengths from an arbitrary anchor and implement
counter operations. Finally, we reduce the halting problem for two-counter
machines to reachability. An explicit sequence of pairs supplies a run whenever
the machine halts.

\section{Pairs of words and the step relation}
\label{sec:step}

Let
\(
D=\{(a,b)\in(\Sigma^*)^2:a\incomp b\}.
\)
Both words of every pair in $D$ are nonempty, since the empty word is an infix
of every word. For $A=(a,b)\in D$ and $C=(c,d)\in D$, define
\begin{equation}
\label{eq:U}
U(A,C)\quad\Longleftrightarrow\quad
a\pref c\ \land\ b\suff c\ \land\ a\suff d\ \land\ b\incomp d.
\end{equation}
The first word of the new pair begins with $a$ and ends with $b$. The second
ends with $a$ and is infix-incomparable with $b$. Membership of the two pairs in
$D$ contributes $a\incomp b$ and $c\incomp d$. Thus the full expansion of $U$
is a conjunction of six allowed atoms.

For example,
\[
U\bigl((\word{01},\word{00}),(\word{0100},\word{0101})\bigr).
\]
Indeed, $\word{0100}$ begins with $\word{01}$ and ends with $\word{00}$,
while $\word{0101}$ ends with $\word{01}$ and contains no $\word{00}$.
Within each pair the words are distinct and have equal length, so they are
infix-incomparable.

A \emph{$U$-path} is a finite sequence $Q_0,\ldots,Q_r$ such that
$U(Q_i,Q_{i+1})$ holds at each step. Its length is $r$: we count steps, not
pairs. A path of length zero is allowed. Every step strictly lengthens the
first word.

\begin{lemma}
\label{lem:unique}
For any $Z,C\in D$, all $U$-paths from $Z$ to $C$ have the same length.
\end{lemma}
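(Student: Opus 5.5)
The plan is to prove the lemma by strong induction on the length of the first word of the endpoint $C=(c,d)$, showing that every two $U$-paths from $Z$ to $C$ have the same length. Since every step strictly lengthens the first word, induction on $|c|$ is well founded. If $C=Z$, the only path is the trivial one: a nontrivial path from $Z$ back to $Z$ would strictly increase the first word, which is impossible. Otherwise, consider two paths ending in $C$, with last steps $U(A,C)$ and $U(A',C)$, where $A=(a,b)$ and $A'=(a',b')$. Both $A$ and $A'$ have first words shorter than $c$. So by induction all paths from $Z$ to $A$ have one common length, and all paths from $Z$ to $A'$ have one common length. It therefore suffices to show that these two lengths agree, and the natural way is to prove a rigidity statement about $U$-predecessors of a fixed $C$.

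The first step is to collect what the two predecessors share. Both $a$ and $a'$ are proper prefixes of $c$, so they are $\wpref$-comparable. Both are proper suffixes of $d$, so they are also $\wsuff$-comparable. Likewise $b$ and $b'$ are both proper suffixes of $c$, hence $\wsuff$-comparable. Without loss of generality assume $|a|\le|a'|$. Then $a\wpref a'$ and $a\wsuff a'$, so $a$ is a border of $a'$, and in particular $a\winfix a'$.

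The key claim I would try to prove is that two distinct predecessors cannot coexist, i.e.\ $A=A'$. This immediately gives the common length, by the induction hypothesis applied to $A$.

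\emph{Case $|b|\le|b'|$.} Then $b\wsuff b'$, so $b\winfix b'$. I would combine this with $a\winfix a'$ and with the condition $a'\incomp b'$: if one of $a,b$ can be placed as an infix relative to the other through $c$, this should contradict either $a\incomp b$ or $a'\incomp b'$. The relevant overlaps are where $a'$ ends and $b'$ begins inside $c$.

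\emph{Case $|b|>|b'|$.} Then $b'\suff b$. Here I would use $b\incomp d$ together with $a'\suff d$. Since $d$ ends with $a'$, and $a'$ contains $a$ both as a prefix and as a suffix, locating $b'$ or $a$ inside $d$ should force $b$ to be comparable with $d$.

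If full uniqueness of predecessors turns out to be false, I would fall back on a weaker potential-type statement. One option is to show that whenever $A\neq A'$ are both predecessors, one is reachable from the other in a way that preserves the count. Another is to define an explicit invariant, for example the number of nested borders of $c$ lying above the first word of $Z$, and prove that it increases by exactly one along each $U$-step.

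The main obstacle is this predecessor analysis: excluding two predecessors whose first words are nested borders $a\sqsubset a'$ of $c$. This is exactly where infix incomparability must be used, rather than mere prefix and suffix comparability, because without $\incomp$ the border chain of $c$ would allow skipping. I would first attack the case $|b|>|b'|$ by placing $a'$ at the end of $d$ and tracing where $b'$ lies. If that does not close, I would switch to the invariant-based approach.
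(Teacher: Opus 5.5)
\textbf{Gap: the predecessor of $C$ is not unique.} Your reduction is sound: by induction on $|c|$, it suffices to show that the last predecessors $A,A'$ on the two paths have the same height. But your proposed way of showing this, $A=A'$, is false, even for predecessors reachable from the same anchor. Take $Z=(\word{0},\word{1})$ and $C=(\word{01000},\word{0101})$. Then $Z\step(\word{01},\word{00})\step C$ and $Z\step(\word{01},\word{000})\step C$ are both $U$-paths: $\word{00}$ and $\word{000}$ are proper suffixes of $\word{01000}$ and are not infixes of $\word{0101}$. The two predecessors share their first word, and every $U$-step strictly lengthens the first word. So neither is reachable from the other, and your first fallback fails as well.

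Locally, even nested first words $a\sqsubset a'$ occur, in both of your cases. The pair $(\word{001},\word{000})$ has predecessors $(\word{0},\word{1})$ and $(\word{00},\word{01})$, with $|b|\le|b'|$. The pair $(\word{0011},\word{000})$ has predecessors $(\word{0},\word{11})$ and $(\word{00},\word{1})$, with $|b|>|b'|$. Neither case analysis can therefore end in a contradiction. Equality of heights cannot be extracted from the neighbourhood of $C$; it must use that both paths start at the same $Z$. Your invariant-based fallback names no quantity whose path-independence you can prove, and that path-independence is exactly the content of the lemma.

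\textbf{What is missing: a global comparison of the two paths.} The paper writes the paths as $A_m\step\cdots\step A_0=C$ and $B_n\step\cdots\step B_0=C$, with $A_m=B_n$, $A_i=(a_i,b_i)$, $B_j=(c_j,d_j)$ and $m\le n$. It then proves by induction on $i\le m$ that $b_i\not\winfix c_i$.
\begin{itemize}
\item The base case is $b_0\incomp a_0=c_0$.
\item For the inductive step, $c_i\pref c_{i-1}$ gives $b_{i-1}\not\winfix c_i$. The suffix conditions of $U$ swap the two coordinates at every step, so $b_{i-1}$ and $c_i$ are suffixes of the same coordinate of $C$. They are therefore $\wsuff$-comparable, which forces $c_i\suff b_{i-1}$. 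Then $b_i\incomp b_{i-1}$ yields $b_i\not\winfix c_i$.
\item If $n>m$, the common start gives $b_m=d_n\suff c_{n-1}\wpref c_m$, which contradicts the invariant.
\end{itemize}
Infix incomparability is thus used not to make predecessors unique, but to carry a non-containment invariant along entire paths back to their shared starting pair.
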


\begin{proof}
Consider two $U$-paths with the same initial and final pairs,
numbered by their distance from the endpoint:
\[
A=A_m\xrightarrow{U}\cdots\xrightarrow{U}A_0=C,
\qquad
A=B_n\xrightarrow{U}\cdots\xrightarrow{U}B_0=C.
\]
Write $A_i=(a_i,b_i)$ and $B_j=(c_j,d_j)$.
By symmetry, assume $m\le n$.
We compare $A_i$ with $B_i$: both positions have exactly
$i$ steps remaining before the common endpoint.

Working backwards along the paths, we prove by induction on $i$ that
\[
b_i\not\winfix c_i
\qquad \text{ for } i \in \{0, \dots, m\}.
\]
Here $\winfix$ allows equality.

For $i=0$, both paths are at the endpoint:
\(
(a_0,b_0)=(c_0,d_0)=C.
\)
Thus $c_0=a_0$. Since $C\in D$, its two coordinates are
infix-incomparable. In particular, $b_0\not\winfix a_0$,
and hence $b_0\not\winfix c_0$.

Now let $1\le i\le m$ and assume that
\(
b_{i-1}\not\winfix c_{i-1}.
\) 
The edge $B_i\xrightarrow{U}B_{i-1}$ gives
$c_i\pref c_{i-1}$. Any infix of $c_i$ is also an infix
of $c_{i-1}$. Hence \( b_{i-1}\not\winfix c_i. \)

{We show that $c_i$ is a proper suffix of $b_{i-1}$.}
Each edge $A_t\xrightarrow{U}A_{t-1}$ exchanges the
coordinates in its suffix conditions: 
\(
a_t\suff b_{t-1},
\qquad
b_t\suff a_{t-1},
\)
and likewise on the second path. 
Following these conditions,
a word remains a suffix of the current coordinate, but changes
from the first coordinate to the second, or from the second
to the first, at every step.

The word $b_{i-1}$ starts in the second coordinate with
$i-1$ steps remaining, whereas $c_i$ starts in the first
coordinate with $i$ steps remaining.
They therefore end in the same coordinate of $C$: $a_0$ ($=c_0$) if $i$ is even and $b_0$($=d_0$) if $i$ is odd. Suffixes of a fixed word are linearly ordered by weak
suffix containment, so we have
\( b_{i-1}  \wsuff c_i \text{ or } c_i \wsuff b_{i-1} \).
The relation $b_{i-1}\wsuff c_i$
 is impossible because it would contradict
$b_{i-1}\not\winfix c_i$. Consequently,
\(
c_i\suff b_{i-1}.
\)

We now use infix incomparability on the first path.
The edge $A_i\xrightarrow{U}A_{i-1}$ requires
\(
b_i\incomp b_{i-1}.
\) 
If $b_i\winfix c_i$, then $c_i\suff b_{i-1}$ would imply
$b_i\winfix b_{i-1}$, contradicting this requirement.
Thus
\(
b_i\not\winfix c_i,
\)
which completes the induction.

At $i=m$, we obtain $b_m\not\winfix c_m$.
Suppose that $n>m$.
Since $A_m=B_n$, we have $b_m=d_n$.
The first edge of the second path gives $d_n\suff c_{n-1}$,
and its prefix conditions give $c_{n-1}\wpref c_m$.
Therefore
\(
b_m=d_n\suff c_{n-1}\wpref c_m,
\) so $b_m\winfix c_m$, a contradiction.
Hence $m=n$.
\end{proof}

\section{Anchor and counter operations}
\label{sec:height}

Fix an arbitrary \emph{anchor} $Z\in D$. 
An arbitrary run of a constraint automaton can start with any words and in our logic without constants we cannot control it.  But once it starts with some pair of words, we will ensure that it belongs to $D$ and build on it. 

Let
\(
R_Z=\{P\in D:\text{there is a $U$-path from $Z$ to $P$}\}.
\)
For $P\in R_Z$, define its \emph{height} $h_Z(P)$ to be the length of any path
from $Z$ to $P$. Lemma~\ref{lem:unique} makes this independent of the chosen
path. Height is undefined outside $R_Z$.

The anchor is floor zero, and every $U$-step goes up one floor. Height counts
steps rather than letters. Directly from the definition,
$h_Z(P)=0$ if and only if $P=Z$
and if
$P\in R_Z \land U(P,P')$, then $P'\in R_Z\land h_Z(P')=h_Z(P)+1.$

We also need an explicit sequence providing every height for one possible
anchor. Define $P_r=(X_r,Y_r)$ by
\begin{equation}
\label{eq:witnesses}
X_0=\word{0},\quad Y_0=\word{1},\quad
X_{r+1}=X_rY_r,\quad Y_{r+1}=X_rX_r.
\end{equation}
The first pairs are 
$(\word{0}$, $\word{1})$,
$(\word{01}$, $\word{00})$,
$(\word{0100}$, $\word{0101})$, 
$(\word{01000101}$, $\word{01000100})$.

\begin{lemma}
\label{lem:witnesses}
For every $r\geq0$, we have $P_r\in D$, $U(P_r,P_{r+1})$, and
$h_{P_0}(P_r)=r$.
\end{lemma}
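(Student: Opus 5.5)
The plan is to prove the three claims together by induction on $r$, after recording basic facts about $X_r$ and $Y_r$. A direct induction shows $|X_r|=|Y_r|=2^r$, so both words of each pair have the same length. We also need $X_r\neq Y_r$. For $r=0$ this is clear. For $r\ge1$ we have $X_r=X_{r-1}Y_{r-1}$ and $Y_r=X_{r-1}X_{r-1}$, which share the first half and differ in the second half because $Y_{r-1}\neq X_{r-1}$. Two distinct words of equal length cannot be infixes of one another, so $X_r\incomp Y_r$ and $P_r\in D$ for all $r$.

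Next I verify $U(P_r,P_{r+1})$ one atom at a time.
\begin{itemize}
\item $X_r\pref X_{r+1}=X_rY_r$ holds because $Y_r$ is nonempty.
\item $Y_r\suff X_rY_r$ holds because $X_r$ is nonempty.
\item $X_r\suff Y_{r+1}=X_rX_r$ holds for the same reason.
\item The remaining atom is $Y_r\incomp Y_{r+1}$, that is, $Y_r\incomp X_rX_r$.
\end{itemize}
Since $|Y_r|<|X_rX_r|$, this last atom reduces to showing that $Y_r$ is not an infix of $X_rX_r$. This is the main obstacle. The paper's examples already use it: $\word{00}$ does not occur in $\word{0101}$, and $\word{0101}$ does not occur in $\word{01000100}$.

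For that obstacle I would strengthen the induction hypothesis into a structural statement about occurrences. One candidate is that $Y_r$ does not occur in $X_rX_r$ at all, and that its occurrences in $X_r$ and $Y_r$ are only at aligned positions. An alternative route is to observe that $X_r$ and $Y_r$ differ only in their last letter. Indeed, $Y_{r+1}$ and $X_{r+1}$ share the prefix $X_r$, and their suffixes $X_r$ and $Y_r$ differ only at the last letter by induction. Then $X_rX_r$ and $Y_r$ can be written in terms of $w=X_{r-1}X_{r-1}$-like blocks, and one compares the last-letter pattern. I would try the following argument first. Suppose $Y_r$ occurs in $X_rX_r$ at offset $k$ with $0<k<2^r$. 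Then $X_r$ would have a period-like overlap with itself. I would rule this out using the decomposition $X_r=X_{r-1}Y_{r-1}$, splitting on whether $k$ is below or above $2^{r-1}$ and descending to level $r-1$. If this fails, a cleaner route is to characterise $X_r$ combinatorially, for instance as a period-doubling-type word where position parities determine letters. Offsets $k$ not divisible by $2^{r-1}$ would then be refuted by parity, and the few aligned offsets by the last-letter difference.

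Finally, $h_{P_0}(P_r)=r$ follows because $P_0,P_1,\ldots,P_r$ is a $U$-path from $P_0$ to $P_r$ of length $r$. Hence $P_r\in R_{P_0}$, and Lemma~\ref{lem:unique} makes its length the height.
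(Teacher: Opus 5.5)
Your checks of $P_r\in D$, of the three prefix and suffix atoms, and of $h_{P_0}(P_r)=r$ via the path $P_0\step\cdots\step P_r$ and Lemma~\ref{lem:unique} are correct. The gap is the step you yourself call the main obstacle: you never prove that $Y_r$ is not an infix of $X_rX_r$. You list candidate strategies instead of an argument:
\begin{itemize}
\item a strengthened hypothesis about ``aligned occurrences'' that is not stated precisely;
\item a case split on the offset $k$ that descends to level $r-1$ but is not carried out;
\item a period-doubling characterisation in which non-aligned offsets are ``refuted by parity'' without saying which positions conflict.
\end{itemize}
The atom $Y_r\incomp X_rX_r$ is the only nontrivial part of $U(P_r,P_{r+1})$, so as written the proposal does not establish the lemma.

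The missing idea is short, and you already have its key ingredient. You correctly show by induction that $X_r$ and $Y_r$ have length $2^r$ and differ only in their last letter. Every infix of length $2^r$ of $X_rX_r$ is the window starting at some offset $k$ with $0\le k\le 2^r$. Each such window is a cyclic rotation of $X_r$, so it contains exactly as many ones as $X_r$. Since $Y_r$ differs from $X_r$ in exactly one position, its number of ones differs from that of $X_r$ by one, so $Y_r$ equals no window. Together with $|Y_r|<|X_rX_r|$, this gives $Y_r\incomp X_rX_r$ without any offset analysis or strengthened induction hypothesis. This counting argument is the one the paper uses.
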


\begin{proof}
By induction, $X_r,Y_r$ have length $2^r$ and differ only in their last letter.
They are therefore distinct and infix-incomparable. The three prefix and suffix
conditions in $U(P_r,P_{r+1})$ follow from~\eqref{eq:witnesses}.

It remains to check $Y_r\incomp X_rX_r$. Every infix of length $|X_r|$ in
$X_rX_r$ is a cyclic shift of $X_r$ and hence has the same number of ones as
$X_r$. But $Y_r$ has a different number of ones, since it differs from $X_r$
in exactly the last letter. It cannot be such an infix. The reverse containment
is impossible by length. Thus $P_0\step\cdots\step P_r$ is a path of length $r$.
\end{proof}

The words $\word{0},\word{1}$ and the concatenations in~\eqref{eq:witnesses}
are used only to construct witnesses. They do not occur in transition labels:
the automaton checks only the atoms defining $U$ and equalities between
registers.

A height need not have a unique representation. For example, both
$(\word{0100},\word{0101})$ and $(\word{0100},\word{101})$ have height~2
relative to $P_0$. Both are reached through $(\word{01},\word{00})$. In the
second case, $\word{01}\suff\word{101}$ and
$\word{00}\incomp\word{101}$; the other conditions hold as well.
We do not need every pair of height $r$ to support all operations supported by
$P_r$. Correctness will hold for every completed operation, regardless of the
words chosen. To realise a halting computation, we will choose the pairs $P_r$.

\medskip

A \emph{counter} is a pair $P\in R_Z$, with value $h_Z(P)$. The anchor $Z$ is stored
in two registers and copied on every transition. Setting $P$ equal to $Z$
gives zero. The guard $U(P,P')$ increments the value by one, and $P=Z$ tests
for zero. Equality of pairs abbreviates two equality atoms.

Simply reversing $U$ need not preserve the domain of height. For example,
$P_2$ is reachable from $P_0$, and
\(
U\bigl((\word{01},\word{100}),P_2\bigr),
\)
yet $(\word{01},\word{100})\notin R_{P_0}$. Its first word has length~2, so
a path from $P_0$ could have only one step. That step is impossible, since
$\word{1}\not\incomp\word{100}$.

We therefore search for a predecessor by moving forwards
from the anchor. The search uses one auxiliary pair $S$,
shared by the two counters, and leaves $P$ unchanged until
a predecessor is found. The macro has three stages:
\begin{enumerate}
\item Enter the auxiliary state and initialise the search:
      \(
      S'=Z\ \land\ P'=P.
      \)
\item Take any number of forward steps, keeping the counter fixed:
      \(
      U(S,S')\ \land\ P'=P.
      \)
\item Exit when the current search pair is a predecessor of $P$:
      \(
      U(S,P)\ \land\ P'=S\ \land\ S'=Z.
      \)
\end{enumerate}
The assignments in stage~3 are simultaneous: the new counter
receives the old value of $S$, while the auxiliary pair is
reset to the anchor.

Every transition requires all pairs, before and after
the transition, to belong to $D$. The anchor and the other
counter are copied throughout. The auxiliary control state
has only the loop from stage~2 and the exit from stage~3.

To see that the macro implements decrement, let $P_{\mathrm{in}}$
be the input counter pair. It remains unchanged during the search.
If the macro completes after $t\ge0$ loop iterations, its successive
search values and the exit guard give a path
\[
Z=S_0\step S_1\step\cdots\step S_t\step P_{\mathrm{in}}.
\]
Thus $S_t,P_{\mathrm{in}}\in R_Z$, and uniqueness of path length gives $h_Z(S_t)=t$ and $h_Z(P_{\mathrm{in}})=t+1$.

The exit sets $P_{\mathrm{out}}=S_t$ and resets $S$ to $Z$.
Hence every completed execution decreases the counter height
by exactly one and has an input of positive height.

Conversely, if $P_{\mathrm{in}}\in R_Z$ has height $r>0$,
choose a path
\[
Z=Q_0\step Q_1\step\cdots\step Q_r=P_{\mathrm{in}}.
\]
Start with $S=Q_0$, take $r-1$ loop iterations through
$Q_1,\ldots,Q_{r-1}$, and exit using $U(Q_{r-1},P_{\mathrm{in}})$.
For $r=1$, no loop iteration is needed. Thus every reachable
positive-height input admits a completed decrement, whereas
a zero-height input admits none.

Throughout the search, $S$ belongs to $R_Z$. The search is
nondeterministic and need not terminate for every choice of forward
steps, but the counter is changed only at a certified exit.
No guard for membership in $R_Z$ or for positive height is required.

For example, with anchor $Z=P_1$, the pair $P_3$ has height~2 via
$P_1\step P_2\step P_3$. A decrement starts with $S=P_1$,
takes one forward step to $S=P_2$, and exits using $U(P_2,P_3)$.
The new counter is $P_2$, of height~1, and $S$ is reset to $P_1$.
The construction works for an arbitrary anchor; no particular words
are built into the guards.

\section{Reduction from two-counter machines}
\label{sec:reduction}

We use a deterministic machine with two counters in $\mathbb N$. Both counters start at zero. The
instructions are of the forms
\(
\operatorname{INC}(C_i);\ \mathrm{goto}\ q,
\)
and
\(
\text{if }C_i=0,\ \mathrm{goto}\ q_0;
\text{else }C_i:=C_i-1,\ \mathrm{goto}\ q_1.
\)
The halting problem for these machines is undecidable by Minsky's classical
universality result for two-register
machines~\cite[Chapter~14]{Minsky1967}. We direct
all halting instructions to one state $\mathsf{HALT}$.

Given a machine $M$, construct an automaton $\mathcal A_M$ with four pairs,
hence eight registers:
\[
Z=(z,w),\qquad A_1=(x_1,y_1),\qquad
A_2=(x_2,y_2),\qquad S=(s,t).
\]
On every transition, require all pairs before and after it to belong to $D$,
copy the anchor $Z$, and copy every unchanged pair. These requirements are
conjunctions of equality and infix-incomparability atoms.

The unique initial state $\mathsf{INIT}$ has a transition to the initial program
state setting
all initial pairs equal to the chosen anchor, representing the value 0: $
A'_1=Z, A'_2=Z, S'=Z.
$

Translate an increment of $C_i$ into a transition with guard $U(A_i,A'_i)$.
For a conditional instruction, use a zero branch with guard $A_i=Z$ leading
to $q_0$ and a decrement macro leading to $q_1$. Give each conditional
instruction its own auxiliary control state. Its four edges are the zero
branch, the entry into the macro, the loop, and the exit. The auxiliary pair
$S$ is shared. The construction is effective and has size linear in the size
of the machine.

\begin{proof}[Proof of Theorem~\ref{thm:main}]
Suppose first that $\mathcal A_M$ reaches $\mathsf{HALT}$. At every visit to a
program state, both counter pairs lie in $R_Z$. This follows inductively:
initialisation gives the anchor, an increment extends a $U$-path, and a
completed decrement macro supplies a path from the anchor to the new counter
pair. The other counter is copied.

We can therefore interpret the counters as $h_Z(A_1),h_Z(A_2)$. An increment
adds one, the zero branch requires exactly zero, and a decrement subtracts one
only from a positive value. A run reaching $\mathsf{HALT}$ must have left every
auxiliary state it entered, so all its macros completed. Removing those
auxiliary steps gives a valid machine computation reaching $\mathsf{HALT}$.
This direction holds for every anchor chosen by the run.

Conversely, suppose that $M$ halts. Choose $Z=P_0$ and represent
counter value $r$ by $P_r$. Lemma~\ref{lem:witnesses} realises
every increment, and equality with the anchor detects zero.
For a decrement from $P_r$ with $r>0$, initialise $S=P_0$,
take $r-1$ forward steps along the witness path to $P_{r-1}$,
and exit using $U(P_{r-1},P_r)$. This sets the counter to
$P_{r-1}$ and resets $S$ to $P_0$. For $r=1$, the exit is
taken immediately, without any loop iteration.
Thus every instruction of the halting computation can be
realised, and $\mathcal A_M$ reaches $\mathsf{HALT}$.
Only the existence of this choice of anchor is needed
in this direction.

Consequently,
$M$ halts if and only if 
$\mathsf{HALT}$ is reachable in $\mathcal A_M$.
Every transition label is a conjunction of allowed atoms, with no word constants
or concatenation. Undecidability of the halting problem proves the theorem.
\end{proof}

\section{Consequences}
\label{sec:consequences}

Both corollaries hold under the restrictions of Theorem~\ref{thm:main}: eight
registers, no word constants, and the same signature.

\begin{corollary}
\label{cor:finite}
Nonemptiness of constraint automata for finite runs is undecidable.
\end{corollary}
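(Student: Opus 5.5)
Nonemptiness for finite runs asks whether some finite run starts in an initial state and ends in an accepting state. The plan is to reduce reachability to this problem, using the automaton $\mathcal A_M$ from Section~\ref{sec:reduction}. Given a two-counter machine $M$, build $\mathcal A_M$ as before and declare $\mathsf{HALT}$ to be the unique accepting state, keeping $\mathsf{INIT}$ as the unique initial state. A finite run is accepted exactly when it ends in $\mathsf{HALT}$.

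We show that $\mathcal A_M$ accepts some finite run if and only if $\mathsf{HALT}$ is reachable in $\mathcal A_M$.

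\emph{From acceptance to reachability.} An accepting finite run ends in $\mathsf{HALT}$, so $\mathsf{HALT}$ occurs in it.

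\emph{From reachability to acceptance.} Suppose $\mathsf{HALT}$ occurs in some finite run. Truncate the run at the first configuration whose control state is $\mathsf{HALT}$. A prefix of a run is still a run, because each transition guard involves only the two consecutive configurations it connects. The truncated run therefore ends in $\mathsf{HALT}$ and is accepting.

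Combining this equivalence with the proof of Theorem~\ref{thm:main}, $M$ halts if and only if $\mathcal A_M$ has an accepting finite run. The construction is effective, and it keeps the same eight registers, the same signature, and the absence of constants. Undecidability of the halting problem for two-counter machines therefore transfers to finite-run nonemptiness.

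There is no real obstacle here. The only point that needs care is the convention for acceptance. If the acceptance condition is instead evaluated on the final configuration's state, possibly with extra structure, we can still make $\mathsf{HALT}$ accepting, or add an absorbing accepting state whose transitions copy all registers. Copying all registers preserves membership in $D$, since the guards can require it of the copied pairs.
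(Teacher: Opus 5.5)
Your proposal is correct and follows the paper's proof: make the designated state ($\mathsf{HALT}$ in $\mathcal A_M$) the only final state, so that finite-run nonemptiness coincides with reachability. Your truncation argument, that a prefix of a run is a run, spells out the direction the paper leaves implicit.
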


\begin{proof}
Make the designated state the only final state. A finite accepting run exists
exactly when that state is reachable.
\end{proof}

\begin{corollary}
\label{cor:buchi}
B\"uchi nonemptiness of constraint automata is undecidable. It remains
undecidable when restricted to runs whose word lengths are uniformly bounded.
\end{corollary}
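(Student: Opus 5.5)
We reduce from reachability in $\mathcal A_M$ (Theorem~\ref{thm:main}) by making $\mathsf{HALT}$ a looping state. From $\mathsf{HALT}$ we add a self-loop whose guard copies every register, $x'_j=x_j$ for all $j$, together with the membership-in-$D$ conjunction used everywhere else. We make $\mathsf{HALT}$ the only accepting state.

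For the first claim, note that any B\"uchi accepting run visits $\mathsf{HALT}$, so $\mathsf{HALT}$ is reachable. Conversely, a finite run reaching $\mathsf{HALT}$ extends by the identity loop to an infinite run visiting $\mathsf{HALT}$ infinitely often. The identity loop is always enabled at $\mathsf{HALT}$, since the configuration there already satisfies the $D$-constraints and equality atoms are satisfied by copying. Hence B\"uchi nonemptiness is equivalent to reachability of $\mathsf{HALT}$, and undecidability transfers with the same eight registers, no constants and the same signature.

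For the bounded version, we need that whenever a bounded accepting run exists, $M$ halts, and conversely. The converse is immediate. The witness run from the proof of Theorem~\ref{thm:main} is finite before reaching $\mathsf{HALT}$, so only finitely many words occur in it. The infinite suffix repeats the final configuration forever, so all word lengths in the run are bounded by the maximum over that finite prefix. The forward direction needs no boundedness: any accepting run, bounded or not, reaches $\mathsf{HALT}$, and the proof of Theorem~\ref{thm:main} then shows that $M$ halts.

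The only point requiring care is that the automaton of Theorem~\ref{thm:main} may have other infinite runs, for example a nonterminating decrement search or a nonhalting machine simulation. These runs are harmless because they never visit the unique accepting state $\mathsf{HALT}$. This is why accepting states must be restricted to $\mathsf{HALT}$ rather than taken to be all states. The main step is therefore simply to check that the added self-loop uses only equality and $\incomp$ atoms, which keeps the construction within the stated signature.
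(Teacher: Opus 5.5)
Your proposal is correct and follows the paper's proof. You make $\mathsf{HALT}$ the sole accepting state and add an identity self-loop there, so accepting runs are exactly the extensions of finite runs reaching $\mathsf{HALT}$, and the halting witness (finite prefix, constant tail) is automatically length-bounded. The paper also makes this loop the only outgoing transition of $\mathsf{HALT}$, but, as you note implicitly, the argument only needs an accepting run to visit $\mathsf{HALT}$ once.
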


B\"uchi nonemptiness asks whether an infinite run visits accepting states
infinitely often. In the bounded variant, there must also be a single number
$B$ bounding the lengths of all words in the run. The bound is not part of the
input.

\begin{proof}
In $\mathcal A_M$, make $\mathsf{HALT}$ the only accepting state and give it
just one outgoing transition: a self-loop copying all registers. Every accepting
run must reach $\mathsf{HALT}$. Conversely, every finite run to that state
extends to an accepting run by taking this loop forever. Word lengths are then
uniformly bounded: the prefix is finite and the tail is constant. Thus
\[
\begin{aligned}
M\text{ halts}
&\iff \mathcal A_M\text{ has an accepting B\"uchi run}\\
&\iff \mathcal A_M\text{ has a bounded accepting B\"uchi run}.
\end{aligned}
\]
This proves both claims.
\end{proof}
\section{PSPACE-completeness without infix incomparability}
\label{sec:decidability}

We now consider arbitrary numbers of registers and remove infix
incomparability. In addition to proper prefix and proper suffix, we allow
prefix and suffix incomparability: $u\perp_Pv$ means that neither word is a
prefix of the other, and $\perp_S$ is defined analogously. These relations
also exclude equality. Word constants in this section are written explicitly
as binary words.

\begin{theorem}
\label{thm:pspace}
B\"uchi nonemptiness over finite binary words is PSPACE-complete for constraint
automata with Boolean transition guards over
$\{=,\pref,\suff,\perp_P,\perp_S\}$ and arbitrary word constants.
\end{theorem}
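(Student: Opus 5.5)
The plan has two halves: hardness and membership. Hardness is routine. PSPACE-hardness already holds for plain finite-state B\"uchi emptiness presented succinctly. I would reduce from acceptance of a linear-space Turing machine, storing each tape cell in a register restricted to the constants $\word{0}$ and $\word{1}$, with equality guards against constants. Repeating the halting configuration forever gives an accepting B\"uchi run, so the guards use only $=$ and constants.

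For membership I would follow the symbolic approach described in the introduction. Abstract each configuration by a \emph{type}: the complete description of the relations among all current and next register values and the constants, over $\{=,\pref,\suff,\perp_P,\perp_S\}$. Following the introduction's idea, I would represent each word $u$ by two words of equal length, a ``prefix view'' and a ``suffix view''. Prefix relations then live in one tree (the prefix order on words) and suffix relations in another (the reversed prefix order), and these two trees interact only through the lengths. For a finite set of registers and constants, the prefix information is a finite forest-shaped pattern: which meets (longest common prefixes) exist and how lengths compare. The same holds for the suffix information. A symbolic state then records, for the current registers and constants, the prefix-tree shape, the suffix-tree shape, and a length ordering with bounded offsets, the offsets being bounded by the constant lengths. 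Such a state has polynomial size.

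The PSPACE algorithm guesses a lasso of symbolic states, checking consistency of each consecutive pair with a guard, which is a local polynomial check. It uses Savitch-style on-the-fly search, as for succinct B\"uchi emptiness, and accepts if the loop contains an accepting control state.

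The main obstacle is soundness: every symbolically consistent infinite sequence must be realisable by actual binary words. Two difficulties arise. First, the prefix and suffix constraints of a single word must be merged. Second, infinite runs may demand unboundedly many strictly growing words. For the first I would prove a global reconstruction lemma. Given the prefix-views and suffix-views of all values over the whole run, assigned consistently with their trees and with equal lengths, build the actual words by writing a long enough padding block between the prefix-determined part and the suffix-determined part. This is where the binary alphabet and the one-directional preservation noted in the introduction matter. Relations required to hold, including $\perp_P$ and $\perp_S$, must be transferred to the constructed words. Required non-relations need no check, since guards are Boolean over complete types and every negated atom is expressed positively via incomparability or a reversed order. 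The padding would use distinct blocks, such as distinct binary codes of bounded length, so that no accidental prefix or suffix relations are created. I would try to make this length bookkeeping work by choosing lengths greedily along the run, giving each new value fresh length well above all previous values involved.

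For the infinite aspect, I would argue by K\H{o}nig's lemma, or by building the run incrementally, that a consistent ultimately periodic symbolic run extends step by step to a concrete run. The freshness of lengths guarantees that the required branching points in both trees remain available. Finally, completeness is immediate, since every concrete run induces a symbolic run.
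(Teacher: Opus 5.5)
Your hardness reduction is fine and is the one the paper uses. The membership half, however, rests on a false claim. A lasso of symbolic states that is consistent step by step need not be realisable, so the algorithm you describe is unsound.

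Take one register, a single initial and accepting state, and a self-loop with guard $x_1'\pref x_1$. Every consecutive pair of symbolic states is consistent and your search accepts. Yet there is no infinite run, because word lengths would have to decrease forever. Likewise $x\pref x'\land x'\pref y\land y'=y$ is locally consistent but cannot hold forever, since $y$ has only finitely many prefixes.

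K\H{o}nig's lemma does not help. Every finite prefix of such a symbolic run is realisable, but the realisations branch infinitely (the initial value may be arbitrarily long) and do not cohere. An incremental construction gets stuck after $|x_0|$ steps. Repairing this directly would require global $\omega$-regular conditions excluding such chains in both trees and in the length order, plus a proof that they suffice. That amounts to reproving the Demri--Deters theorem for two trees coupled by lengths.

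The paper avoids this. It writes the pairs $(p_i,s_i)$, the conditions $|p_i|=|s_i|$ and $\operatorname{Coh}_K(p_i,s_i)$, and the normalised positive guards as a polynomial-size formula of $\operatorname{LTL}(\{0,1\}^*,\operatorname{clen})$. Satisfiability of that logic is in PSPACE by Demri and Deters, so all infinite-run realisability questions for the pairs are settled by their theorem. The only new ingredient is converting the pairs of an arbitrary model into words.

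That conversion is your second gap. A layout ``prefix part, padding, suffix part'' does not preserve $\pref$. Suppose $(p,s)$ occurs together with pairs whose first coordinates are $p\word{0}q$ and $p\word{1}q'$. Their values begin with $p\word{0}$ and $p\word{1}$, so $H(p,s)$ would have to be a prefix of $p$ and could not contain its padding or suffix part. Choosing values greedily along the run with fresh large lengths also breaks on guards such as $x'\pref x$, where the new value must be a prefix of an old one.

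The paper instead defines one function $H_K$ on all of $E_K$ by induction on coordinate length, not on time. It sets $H_K(p,s)=L_pR_s$, then replaces $L_p$ and $R_s$ by this value, and at the next stage sets $L_{p\delta}=L_p\delta$. Every value whose first coordinate extends $p\delta$ therefore begins with all values for $p$ followed by the branching bit $\delta$. This gives $\pref$ and $\perp_P$, and the suffix side is symmetric. Constants survive only because of the coherence restriction to $E_K$, with $K$ closed under infixes. Without it, $u\pref\word{00}\land v\pref\word{00}\land u\perp_S v$ has abstract solutions but no real ones.

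Finally, your distinct padding blocks cannot prevent accidental relations. If $(p,s_1)$, $(p,s_2)$ and some $(p',s')$ with $p\pref p'$ occur, both values for $p$ are prefixes of $H(p',s')$ and hence prefix-comparable. Only one-directional preservation of positive atoms is needed, and that is all the paper proves.
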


PSPACE-hardness can be proved by reducing a deterministic Turing machine deciding a
PSPACE-complete language using one register per memory cell, as in \cite{MichaliszynOtopWieczorek2026}. We focus on proving the membership.

The upper bound concerns all infinite runs, with no bound on word lengths.
We represent each register by two words of equal length. One word records its
prefix constraints; the other records its suffix constraints in reverse.
The following lemma reconstructs actual register values from all these pairs
at once. We then reduce nonemptiness to the string temporal logic of
Demri and Deters~\cite{DemriDeters2016}.

\subsection{A global representation preserving constants}
\label{sec:global-representation}

\newcommand{\rev}{R}

Write $w^{\rev}$ for the reversal of $w$ and $X^{\rev} = \{x^{\rev} \mid x \in X\}$ for a set of words $X$. 
Let $C$ be the finite set of
data constants in the automaton and let $K$ contain $\epsilon$, $C$, and all the infixes of elements of $C$, i.e., \(
K=\{\varepsilon\}\cup
\{u:\exists c\in C\;\;u\sqsubseteq_I c\}.
\)

Define
\begin{equation}
\label{eq:coherent-domain}
E_K={}\{(k,k^{\rev}):k\in K\}
{}\cup\{(p,s):|p|=|s|,\ p\notin K,\ s\notin K^{\rev}\}.
\end{equation}
Equivalently, membership in $E_K$ means equal lengths together with
\begin{equation}
\label{eq:coherence}
\operatorname{Coh}_K(p,s)=
\bigwedge_{k\in K}(p=k\ \leftrightarrow\ s=k^{\rev}).
\end{equation}
Every pair $(w,w^{\rev})$ belongs to $E_K$.

\begin{lemma}
\label{lem:global-representation}
For every finite $K\subseteq\{0,1\}^*$ containing
$\varepsilon$ and closed under prefixes and suffixes,
there is an injection
$H_K:E_K\to\{0,1\}^*$
such that $H_K(k,k^{\rev})=k$ for $k\in K$.
For $a=(p,s),b=(p',s')\in E_K$, it satisfies
\begin{equation}
\label{eq:representation-properties}
\begin{aligned}
p\pref p'
&\ \Longrightarrow\ H_K(a)\pref H_K(b),
\\
s\pref s'
&\ \Longrightarrow\ H_K(a)\suff H_K(b),
\\
p\perp_P p'
&\ \Longrightarrow\ H_K(a)\perp_P H_K(b),
\\
s\perp_P s'
&\ \Longrightarrow\ H_K(a)\perp_S H_K(b).
\end{aligned}
\end{equation}
\end{lemma}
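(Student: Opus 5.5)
The plan is to build $H_K$ level by level on the common length $n=|p|=|s|$. Every value will have the shape
\[
H_K(p,s)=p\cdot\mu(p,s)\cdot s^{\rev}
\]
for a middle word $\mu(p,s)$. Constants get $\mu=\varepsilon$ with $p=k$, $s=k^{\rev}$, so $H_K(k,k^{\rev})=k$ as required. Starting with $p$ and ending with $s^{\rev}$ is what should give the two incomparability clauses for free, provided the proper-prefix and proper-suffix clauses hold:
\begin{itemize}
\item If $p\perp_P p'$, the two words $p$ and $p'$ differ at some position $i<\min(|p|,|p'|)$. Then $H_K(a)$ and $H_K(b)$ differ at position $i$, which gives $H_K(a)\perp_P H_K(b)$.
\item Symmetrically, $s\perp_P s'$ makes the two images disagree at a position counted from the right, which gives $H_K(a)\perp_S H_K(b)$.
\end{itemize}
So the real work is the two monotonicity implications, together with injectivity.

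Unwinding what monotonicity forces clarifies the construction. If $p\pref p'$, then $H_K(p',s')$ must have $H_K(p,s)$ as a prefix for every $s$ with $(p,s)\in E_K$. Hence, for a fixed $p$, the images $H_K(p,s)$ over all admissible $s$ of length $|p|$ must form a chain under $\pref$. This is exactly the phenomenon announced in the introduction: pairs with the same first coordinate are forced to become prefix-comparable. Dually, for a fixed $s$, the images over all admissible $p$ must form a chain under $\suff$.

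I therefore proceed by induction on $n$, maintaining two invariants for every word $p$ of length $n$ and every word $s$ of length $n$:
\begin{itemize}
\item there is a longest image $T_p$ among $\{H_K(p,s)\}$, and all of these images are prefixes of $T_p$;
\item there is a longest image $T'_s$ among $\{H_K(p,s)\}$, and all of these images are suffixes of $T'_s$.
\end{itemize}
At level $n+1$, for a non-constant pair $(p,s)$ with predecessors $p^-$ and $s^-$ (their prefix and suffix of length $n$), I choose $H_K(p,s)$ to begin with $T_{p^-}$ followed by the last letter of $p$. I choose it to end with the last letter of $s$ (in the reversed position) followed by $T'^{\rev}$-side material, namely $T'_{s^-}$ as a suffix. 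Between these two parts I insert a padding block.

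The padding block should be chosen from a sequence of pairwise distinct lengths, indexed by an enumeration of the $2^{n+1}$ words $s$. I will try blocks of the form $0^{j}$ with increasing $j$, suitably arranged, so that for fixed $p$ the images are linearly ordered by $\pref$, and symmetrically for fixed $s$. Distinct lengths also yield injectivity within a level. Across levels, injectivity should follow from length growth, since each level's words strictly contain the previous $T$'s.

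The pairs from $K$ form the base of this construction, because $K$ is closed under prefixes and suffixes. A constant is not a proper prefix of a non-constant pair's first coordinate at an incompatible place, since $p\notin K$ only for long enough or new words. Consequently, starting the non-constant images with $T_{p^-}$ is compatible with $H_K(k,k^{\rev})=k$.

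The main obstacle is making the prefix chain and the suffix chain compatible in the same words. For fixed $p$, the images indexed by $s$ must be nested as prefixes, while their tails $s^{\rev}$ differ. This seems to force each shorter image to be entirely contained as a prefix of a longer one, so the right end of a short image must reappear inside the middle of a longer image. The first thing I would try is to make $H_K(p,s)$ literally contain, before its final $T'_{s^-}$-part, the full concatenation of all previous images in its $p$-chain. The dual concatenation for its $s$-chain would then be arranged as a suffix, in a "zig-zag" ordering of the finite grid of pairs at level $n+1$.

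I would then check two things for this concatenation. First, it creates no unintended comparabilities that break injectivity. Second, it keeps the first differing position of $p$ versus $p'$ intact. For the second point, I would place a fresh $p$ at the very front of every image, so that the incomparability argument from the first paragraph is unaffected.
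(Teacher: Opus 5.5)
Your proposal has two genuine gaps.

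\textbf{The shape $H_K(p,s)=p\cdot\mu(p,s)\cdot s^{R}$ cannot be realised.} Your ``for free'' incomparability argument rests on this shape. For constants it already gives $k\cdot k\neq k$. More seriously, every word occurs as a first coordinate in $E_K$: if $q\notin K$, then $K^{R}$ also misses some word of length $|q|$. Prefix monotonicity therefore forces $H_K(p,s)$ to be a proper prefix both of a value beginning with $p0$ and of a value beginning with $p1$. A word beginning with $p$ can do this only if it equals $p$, and ending with $s^{R}$ then forces $s=p^{R}$. So the shape fails for every non-constant pair. For the same reason, your closing suggestion to put a fresh copy of $p$ at the front of every image would destroy prefix monotonicity.

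Your later choice to begin with $T_{p^-}\delta$ is the right one. Here $s^-$ must be the length-$n$ \emph{prefix} of $s$, not its suffix. But incomparability must then be proved differently. By induction, every value whose first coordinate extends $u\delta$ begins with $T_u\delta$, where $T_u$ is the final top of the $u$-chain. So prefix-incomparable first coordinates with longest common prefix $u$ give values beginning with $T_u0$ and $T_u1$. Dually, every value whose second coordinate extends $u\delta$ ends with $\delta T'_u$. This is the separation property used in the paper, and it needs the tops of level $n$ to be frozen before level $n+1$ starts.

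\textbf{The main obstacle is left unresolved.} You need the prefix chain for fixed $p$ and the suffix chain for fixed $s$ to coexist. The padding idea fails: $T_{p^-}\delta0^{j}\delta'T'_{s^-}$ and $T_{p^-}\delta0^{j'}\delta''T'_{s'^-}$ with $j<j'$ are prefix-comparable only if $\delta'=0$. So the chain breaks as soon as two admissible $s$ end in $1$.

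The missing idea is a simple greedy update, needing neither a zig-zag order nor concatenation of whole chains. Keep a current top $L_p$ and $R_s$ for every coordinate of the level, initialised to $T_{p^-}\delta$ and $\delta'T'_{s^-}$, where $\delta,\delta'$ are the last letters of $p,s$. Process the non-constant pairs in any fixed order, set $H_K(p,s)=L_pR_s$, and then reset both $L_p$ and $R_s$ to this value. The new value has the current $L_p$ as a prefix and the current $R_s$ as a suffix, so both chains grow simultaneously.

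Constants fit because $(k,k^{R})$ is the only pair in $E_K$ with first coordinate $k$ or second coordinate $k^{R}$, so their tops stay $k$. Closure of $K$ under prefixes and suffixes then gives $k=T_{k^-}\delta=\delta'T'_{(k^{R})^-}$.

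\textbf{Injectivity.} Injectivity across levels does not follow from length growth, since a new value need only be longer than $T_{p^-}$ and $T'_{s^-}$. Derive it from the four implications instead, using that distinct second coordinates of equal length are prefix-incomparable.
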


\begin{proof}
We construct $H_K(p,s)$ in stages, first for pairs with
coordinates of length~$0$, then length~$1$, and so on.
The idea is to satisfy the prefix requirements at the
left end of the constructed word and the suffix
requirements at its right end.

For this purpose, we keep two families of auxiliary words.
For each first coordinate $p$, the word $L_p$ has every
previously constructed value $H_K(p,s)$ as a prefix.
For each second coordinate $s$, the word $R_s$ has every
previously constructed value $H_K(p,s)$ as a suffix.
Thus a new word beginning with $L_p$ and ending with $R_s$
will preserve all requirements accumulated for these
coordinates. We obtain such a word simply by concatenation.

We first prescribe the values that must remain unchanged. For each $k\in K$ we set
\(
H_K(k,k^{\rev})=k.
\)

The auxiliary words for the empty coordinates are
\(
L_\varepsilon=R_\varepsilon=\varepsilon.
\)

Suppose that all pairs with coordinates shorter than $n$
have been processed. To start the length-$n$ stage, for all $u$ s.t. $|u|=n-1$ set
\[
L_{u0}=L_u0, \qquad
L_{u1}=L_u1, \qquad
R_{u0}=0R_u, \qquad
R_{u1}=1R_u.
\]

Now process, in lexicographic order, all pairs $(p,s)\in E_K$
with $|p|=|s|=n$ and $p\notin K$.
For each pair, define
\(
H_K(p,s)=L_pR_s
\)
and update the auxiliary words by
\(
L_p\gets H_K(p,s),
R_s\gets H_K(p,s).
\)

This update preserves their intended meaning:
the old $L_p$ is a prefix of $L_pR_s$, and the old
$R_s$ is a suffix of $L_pR_s$. Hence all earlier values
for these coordinates remain prefixes or suffixes,
respectively, of the new auxiliary words.
There are only finitely many pairs at each stage, so
every value of $H_K$ is a finite binary word.

We must also check compatibility with the prescribed
values. If $k\in K$, the definition of $E_K$ allows only
the pair $(k,k^{\rev})$ with first coordinate $k$
or second coordinate $k^{\rev}$. Consequently, no update
changes $L_k$ or $R_{k^{\rev}}$.
Induction on $|k|$ gives
\[
L_k=k=R_{k^{\rev}}.
\]
Indeed, the prefix of $k$ obtained by deleting its last
bit belongs to $K$, so the initialisation of $L_k$
reconstructs $k$. Likewise, the suffix obtained by
deleting the first bit belongs to $K$, so the
initialisation of $R_{k^{\rev}}$ reconstructs $k$.
Thus the prescribed value has exactly the required
prefix and suffix.

In the remainder of the proof, $L_p$ and $R_s$ refer
to their final values after their stage.
The construction gives for every $(p,s)\in E_K$:
\[
H_K(p,s)\wpref L_p,
\qquad
H_K(p,s)\wsuff R_s.
\]
It also gives the following separation property:
every value whose first coordinate has prefix $u\delta$
begins with $L_u\delta$, and every value whose second
coordinate has prefix $u\delta$ ends with $\delta R_u$.
For coordinates equal to $u\delta$, this follows from
initialisation and preservation under updates.
For longer coordinates, it follows by repeating the
same argument at subsequent stages.

We can now verify the four required implications.
Suppose first that $p\pref p'$, and let $\delta$ be
the bit of $p'$ immediately following $p$.
Then $H_K(p,s)$ is a prefix of $L_p$, while
$H_K(p',s')$ begins with $L_p\delta$. Therefore
\[
H_K(p,s)\pref H_K(p',s').
\]
Similarly, if $s\pref s'$ and $\delta$ is the next bit
after $s$, then $H_K(p,s)$ is a suffix of $R_s$,
while $H_K(p',s')$ ends with $\delta R_s$. Hence
\[
H_K(p,s)\suff H_K(p',s').
\]

Next, suppose that $p\perp_P p'$.
At the first position where the coordinates differ,
one has bit~$0$ and the other bit~$1$.
Writing $u$ for their longest common prefix, their
constructed values begin with $L_u0$ and $L_u1$,
in some order. They therefore differ immediately after
$L_u$, so neither is a prefix of the other.
If $s\perp_P s'$, the same argument applies from
the right: the values end with $0R_u$ and $1R_u$
for the longest common prefix $u$ of $s$ and $s'$,
and neither is a suffix of the other.
This proves~\eqref{eq:representation-properties}.

Finally, $H_K$ is injective.
For two pairs with different first coordinates,
proper prefix containment or prefix incomparability
forces their values to differ.
If the first coordinates are equal but the pairs
are different, their second coordinates are distinct
words of equal length. They are therefore
prefix-incomparable, which forces the constructed
values to be suffix-incomparable and hence distinct.
\end{proof}

The implications in~\eqref{eq:representation-properties}
are one-way: pairs with the same first coordinate may
acquire a proper prefix relation. This is sufficient
for positive atoms; negations will be eliminated before
translation.

The equal-length condition prevents conflicting prefix
and suffix requirements. Without it, the abstract pairs
$
u=(\word{0},\word{00})$,
$v=(\word{00},\word{0})$
would satisfy both $u\pref v$ and $v\suff u$, which
no actual words can satisfy.

Coherence is needed to preserve constants. Without it,
the unsatisfiable constraints
\[
u\pref\word{00}
\ \land\
v\pref\word{00}
\ \land\
u\perp_S v
\]
would admit the abstract pairs
$u=(\word{0},\word{0})$ and
$v=(\word{0},\word{1})$, with $\word{00}$ represented by
$(\word{00},\word{00})$.
Here $K=\{\varepsilon,\word{0},\word{00}\}$,
and~\eqref{eq:coherence} excludes the second pair.

\subsection{Prefix-length temporal logic}

We use the temporal logic
$\operatorname{LTL}(\{0,1\}^*,\operatorname{clen})$,
where $\operatorname{clen}(u,v)$ denotes the length of the
longest common prefix of the finite binary words $u$ and $v$.

A model is an infinite sequence
$\nu=(\nu_t)_{t\geq 0}$ of valuations assigning a finite binary
word to each variable. A word term is either a constant word
or an expression $\mathbf X^j x$ with $j\geq 0$.
At position $t$, a constant denotes itself and
$\mathbf X^j x$ denotes $\nu_{t+j}(x)$;
we identify $\mathbf X^0x$ with $x$.
Atomic formulae have the form
\[
\operatorname{clen}(u,v)
\leq
\operatorname{clen}(u',v'),
\]
where $u,v,u',v'$ are word terms.
Formulae are built from these atoms using Boolean connectives
and the temporal operators $\mathbf X$ (next) and $\mathbf U$
(until). At position $t$, the formula $\mathbf X\varphi$
holds if $\varphi$ holds at position $t+1$.
The formula $\varphi\,\mathbf U\,\psi$ holds if there is
a position $t'\geq t$ at which $\psi$ holds, and $\varphi$
holds at every position $s$ with $t\leq s<t'$.
We use the standard abbreviations
\(
\mathbf F\varphi=\top\,\mathbf U\,\varphi\) and 
\(
\mathbf G\varphi=\neg\mathbf F\neg\varphi.
\)
Thus $\mathbf F\varphi$ means that $\varphi$ holds eventually,
and $\mathbf G\varphi$ means that it holds at every current
and future position. 
We also write $\operatorname{clen}(u,v)
<
\operatorname{clen}(u',v')$ to denote $(\operatorname{clen}(u,v)
\leq
\operatorname{clen}(u',v')) \land \neg( \operatorname{clen}(u',v')
\leq
\operatorname{clen}(u,v))$.
Satisfiability asks whether a formula
holds at position~$0$ of some model.

We write \(\ell(u)=\operatorname{clen}(u,u)=|u|.\)
Observe that $u$ is a prefix of $v$
if and only if $\operatorname{clen}(u,v)=\ell(u)$.
We will also use the following constant-size abbreviations
for relations between word terms:
\begin{equation}
\label{eq:clen-abbreviations}
\begin{aligned}
u=v
&\ \Longleftrightarrow\
\ell(u)\leq\operatorname{clen}(u,v)
\ \land\
\ell(v)\leq\operatorname{clen}(u,v),
\\
u\pref v
&\ \Longleftrightarrow\
\ell(u)\leq\operatorname{clen}(u,v)
\ \land\
\ell(u)<\ell(v),
\\
u\perp_P v
&\ \Longleftrightarrow\
\operatorname{clen}(u,v)<\ell(u)
\ \land\
\operatorname{clen}(u,v)<\ell(v),
\\
|u|=|v|
&\ \Longleftrightarrow\
\ell(u)\leq\ell(v)
\ \land\
\ell(v)\leq\ell(u).
\end{aligned}
\end{equation}
Here $\pref$ denotes proper prefix and $\perp_P$ denotes
prefix incomparability.

Demri and Deters~\cite[Theorem~1]{DemriDeters2016}
prove that satisfiability of
$\operatorname{LTL}(\{0,1\}^*,\operatorname{clen})$
is PSPACE-complete.

\subsection{Reduction to prefix-length temporal logic}
\label{sec:pspace-reduction}
 We obtain our upper bound by a
polynomial-size reduction to the satisfiability problem of
$\operatorname{LTL}(\{0,1\}^*,\operatorname{clen})$.

Let $A=(Q,Q_0,F,\Delta)$ be the automaton, with non-empty $Q_0, F$, registers
$x_1,\ldots,x_n$ and set $C$ of data constants.
We construct a temporal formula $\Phi_A$ that is satisfiable
exactly when $A$ has an accepting infinite run.
We use the set $K$, the coherence condition
$\operatorname{Coh}_K$, and the domain $E_K$ defined above.

First normalise Boolean guards. For $R\in\{P,S\}$, exactly
one of equality, the two proper containment directions,
and incomparability holds. After pushing negations to atoms,
use
\begin{equation}
\label{eq:guard-normalisation}
\begin{aligned}
\neg(u\sqsubset_R v)
&\ \equiv\
u=v\ \lor\ v\sqsubset_R u\ \lor\ u\perp_R v,
\\
u\ne v
&\ \equiv\
u\pref v\ \lor\ v\pref u\ \lor\ u\perp_P v,
\\
\neg(u\perp_R v)
&\ \equiv\
u=v\ \lor\ u\sqsubset_R v\ \lor\ v\sqsubset_R u.
\end{aligned}
\end{equation}
This yields positive combinations of atoms with only a
constant-factor size increase. 
In what follows, every guard is assumed to be
normalised in this way.

For each register $x_i$, introduce two word variables
$p_i$ and $s_i$. The first coordinate is used to express
prefix relations, and the second to express suffix
relations as prefix relations on reversed words.

The current value of $x_i$ is represented by $(p_i,s_i)$.
Its value after the transition, denoted by $x_i'$, is
represented by the values of these variables at the next
position, namely $(\mathbf X p_i,\mathbf X s_i)$.
A constant word $c\in C$ is represented by the fixed pair
$(c,c^{\rev})$.

For any source term $u$, write $p_u$ and $s_u$ for the
first and second terms of its representing pair.
These are abbreviations: for example,
$p_{x_i'}=\mathbf X p_i$ and $s_c=c^{\rev}$.
Constants introduce no additional variables.

Translate positive atoms as follows,
and extend $T$ through $\land$ and $\lor$:

\begin{tabular}{lcl}
$u=v$
& $\to$
  & $p_u=p_v\ \land\ s_u=s_v$ \\
$u\pref v$
& $\to$
  & $p_u\pref p_v$ \\
$u\suff v$
& $\to$
  & $s_u\pref s_v$ \\
$u\perp_P v$
& $\to$
  & $p_u\perp_P p_v$ \\
$u\perp_S v$
& $\to$
  & $s_u\perp_P s_v$ 
\end{tabular}

The domain condition is
\[
\Psi_K=
\bigwedge_{i=1}^n
\bigl(
|p_i|=|s_i|
\ \land\
\operatorname{Coh}_K(p_i,s_i)
\bigr).
\]
Recall that $\operatorname{Coh}_K(p,s)=
\bigwedge_{k\in K}(p=k\ \leftrightarrow\ s=k^{\rev})$ and observe that since $K$ is of polynomial size, it can be enforced by enumerating all cases.

After expanding the definition of $\operatorname{Coh}_K$,
all relations in the translation and the domain condition
are expressed using~\eqref{eq:clen-abbreviations}.

Choose distinct binary constants $C_q$, for
$q\in Q$, of length logarithmic in $|Q|$.
Use one fresh word variable $\kappa$ for control, and write
$B_q$ for $\kappa=C_q$, with equality expanded
using~\eqref{eq:clen-abbreviations}.
Define $\Phi_A$ as the conjunction of the following formulas
\begin{itemize}
\item $\bigvee_{q\in Q_0}B_q$
\item $\mathsf G\left(
\Psi_K\ \land\!
\bigvee_{(q,\gamma,q')\in\Delta}
(B_q\land T(\gamma)\land\mathsf XB_{q'})\right)$
\item 
$\mathsf G\mathsf F\left(\bigvee_{q\in F}B_q\right)$
\end{itemize}

Empty disjunctions are interpreted as $\bot$.
The three conjuncts impose an initial state, a valid
transition at every position, and infinitely many visits
to accepting states, respectively.
The state codes occur only in the target logic: they are
not added to the data constants used to define $K$ and
are not reconstructed. The transition disjunction forces
a state code at every position, and distinct codes make
that state unique.

\begin{proof}[Proof of Theorem~\ref{thm:pspace}]
For an accepting run $(q_t,\bar w_t)_{t\geq 0}$, define
the valuations of a temporal model by
\[
\nu_t(p_i)=w_{t,i},
\qquad
\nu_t(s_i)=w_{t,i}^{\rev},
\qquad
\nu_t(\kappa)=C_{q_t}.
\]
Reversal changes suffix into prefix and preserves properness
and incomparability. Coherence holds automatically.
Thus these valuations satisfy $\Phi_A$.

Conversely, let $\nu=(\nu_t)_{t\geq 0}$ be a model of
$\Phi_A$. Read the unique control state $q_t$ at each
position and choose a transition witnessing the
corresponding true disjunct.
The global domain condition puts every register pair
in $E_K$. Using the single function from
Lemma~\ref{lem:global-representation}, 
for every $t\geq 0,\ 1\leq i\leq n$
set
\[
w_{t,i}
=
H_K\bigl(\nu_t(p_i),\nu_t(s_i)\bigr).
\]
Every true translated positive atom gives a true source
atom, including comparisons with constants.
Induction through $\land$ and $\lor$ preserves the
normalised guards. This applies equally to old values,
new values, and comparisons between them.
The control states and B\"uchi visits are unchanged.
Every reconstructed value is finite, and repeated pairs
always give the same word. We have proved
that 
$A$ has an accepting infinite run if and only if $\Phi_A\text{ is satisfiable}$.

The construction has polynomial size. The set $K$ contains
only the empty word and substrings of the data constants,
so both $K$ and the coherence conditions have polynomial
size. Guard translation increases size by only a constant
factor, and encoding the finite control also adds only polynomially many symbols. Unused registers
can be discarded, so binary register indices cause
no difficulty.

The PSPACE upper bound follows from the Demri--Deters
theorem. The words $H_K(p,s)$ are used only to prove
correctness and are never constructed by the algorithm.
\end{proof}
\subsection{Further consequences}
\label{sec:decidability-consequences}

\begin{corollary}
\label{cor:pspace-reachability}
Reachability and finite-run nonemptiness are PSPACE-complete under the
assumptions of Theorem~\ref{thm:pspace}.
\end{corollary}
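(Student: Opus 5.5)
The plan is to reduce both problems to Büchi nonemptiness under the assumptions of Theorem~\ref{thm:pspace} for the upper bound, and to reuse the hardness argument for the lower bound.

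For membership, given an automaton $A$ and a designated state $q_f$ (for reachability), build $A'$ by adding a fresh state $q_\infty$. Add a transition from $q_f$ to $q_\infty$ and a self-loop on $q_\infty$. Both use the guard $\bigwedge_{i}x_i'=x_i$, which is a Boolean combination of equality atoms and so lies in the allowed signature. Make $q_\infty$ the only accepting state of $A'$. Then:
\begin{itemize}
\item every finite run of $A$ reaching $q_f$ extends to an accepting infinite run of $A'$ by copying registers forever;
\item conversely, any accepting run of $A'$ must enter $q_\infty$, which is only reachable through $q_f$, and its prefix before that point is a run of $A$ reaching $q_f$.
\end{itemize}
For finite-run nonemptiness, do the same with the added transition leaving every final state of $A$. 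The construction adds no constants, so $K$ is unchanged, and the size grows only linearly. Theorem~\ref{thm:pspace} then gives PSPACE membership. Alternatively, one could translate directly into $\operatorname{LTL}(\{0,1\}^*,\operatorname{clen})$, replacing $\mathsf G\mathsf F$ by $\mathsf F$ and guarding the transition constraint with an until. The automaton-level reduction is simpler, because it reuses Lemma~\ref{lem:global-representation} without rechecking it.

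For hardness, I would observe that the reduction from a polynomial-space deterministic Turing machine cited for Theorem~\ref{thm:pspace} is naturally a reachability reduction. It uses one register per tape cell, holding constants such as $\word{0}$ and $\word{1}$, and encodes the head position and machine state in the control. Acceptance is then reachability of an accepting control state. Because the machine is deterministic and space-bounded, simulating it means reaching the halting or accepting state. So reachability is PSPACE-hard, and finite-run nonemptiness is hard too by making that state the only final one.

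The only point that needs care is checking that the cited hardness construction does not depend on Büchi acceptance. I expect this to be the main, though mild, obstacle. If the cited reduction is stated only for Büchi acceptance, I would fall back on redoing it directly. Each register is forced to equal a constant in $\{\word{0},\word{1}\}$. Each transition compares the cell under the head with a constant, sets the new value by equality with a constant, and copies the other cells. This uses polynomially many registers and states, and reaching the accept state exactly mirrors acceptance of the input.
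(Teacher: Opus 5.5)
Your proposal is correct and matches the paper's proof. The upper bound reduces to B\"uchi nonemptiness via identity self-loops that copy all registers, and hardness comes from the same polynomial-space Turing-machine simulation used for Theorem~\ref{thm:pspace}, which is already a reachability reduction; the only difference is that you route the target (or final) states into a fresh accepting sink $q_\infty$, whereas the paper makes those states accepting and adds the self-loop directly on them.
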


\begin{proof}
For the upper bound, make each designated target or final state accepting and
add an identity self-loop there. B\"uchi nonemptiness is then equivalent to
reaching such a state. The Turing-machine construction in the proof of
Theorem~\ref{thm:pspace} gives hardness for both problems.
\end{proof}

\begin{corollary}
\label{cor:positive-infix}
Theorem~\ref{thm:pspace} and Corollary~\ref{cor:pspace-reachability} remain
valid when proper infix atoms are also allowed, provided they occur only
positively. Infix incomparability is excluded.
\end{corollary}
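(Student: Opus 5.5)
The plan is to eliminate positive infix atoms by a polynomial-size transformation of the automaton. The transformed automaton uses only the signature of Theorem~\ref{thm:pspace}. Both the B\"uchi upper bound and the reachability upper bound then follow from Theorem~\ref{thm:pspace} and Corollary~\ref{cor:pspace-reachability}. The lower bounds are inherited directly, since the fragment only grows.

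The key equivalence handles a proper infix atom $u\sqsubset_I v$, where $u,v$ are register terms (old or new values) or constants. We have $u\sqsubset_I v$ if and only if
\[
u\pref v\ \lor\ u\suff v\ \lor\ \exists y\,(y\pref v\ \land\ u\suff y).
\]
Each disjunct clearly implies that $u$ is an infix of $v$ different from $v$. Conversely, suppose $u$ is a proper infix of $v$ that is neither a prefix nor a suffix of $v$. Take $y$ to be the prefix of $v$ ending where some occurrence of $u$ ends. Then $y\pref v$, since the occurrence is not at the end of $v$. Also $u\suff y$, since the occurrence is not at the start of $v$. If weak infix atoms were also wanted, one would add the disjunct $u=v$.

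To realise the existential, I would first normalise guards as in~\eqref{eq:guard-normalisation}. Infix atoms are assumed to occur only positively, so they are untouched by this step and the guard becomes a positive Boolean combination. Then, for every occurrence of an infix atom in every transition guard, I would introduce a fresh register $y_j$. That occurrence is replaced by
\[
u\pref v\lor u\suff v\lor(y'_j\pref v\land u\suff y'_j).
\]
The witness is thus the \emph{new} value of $y_j$. Because the formula is positive and each occurrence has its own variable, the existential quantifiers can be pulled to the front through $\land$ and $\lor$. The rewritten guard is satisfied by some choice of the $y'_j$ exactly when the original guard holds.

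The only point requiring care is that these guessed values do not interfere with anything else in the run. This is where I expect the main obstacle, mild as it is. The value $y'_j$ chosen at step $t$ is the register content at position $t+1$. It is constrained only by the guard of step $t$, where it occurs primed. No guard mentions an unprimed $y_j$, and initial values are arbitrary. Hence witnesses can be chosen independently at each step.

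From this, accepting runs of the original and transformed automata correspond. One direction drops the auxiliary registers. The other chooses the witnesses step by step, which is also possible along infinite runs. Control states and accepting visits are unchanged. The number of new registers is bounded by the number of atom occurrences, so the size increase is polynomial. Applying Theorem~\ref{thm:pspace} and Corollary~\ref{cor:pspace-reachability} to the transformed automaton gives the claim.
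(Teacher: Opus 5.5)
Your proposal is correct and takes the same approach as the paper. You replace each positive occurrence of $u\sqsubset_I v$ by $u\pref v\lor u\suff v\lor(u\suff z\land z\pref v)$, using a separate fresh, uncopied register per occurrence, and positivity lets the implicit existentials be hoisted, so reachability, finite-run nonemptiness and B\"uchi nonemptiness are preserved. Your explicit choice of the primed value $y'_j$ as witness, constrained only by the guard of its own step, is a slightly more careful version of the paper's remark that the witness can be chosen independently at every transition.
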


\begin{proof}
For each positive infix atom $u\sqsubset_I v$, introduce
a fresh register $z$ and replace the atom by
\(
u\sqsubset_P v
\ \lor\
u\sqsubset_S v
\ \lor\
\bigl(u\sqsubset_S z\land z\sqsubset_P v\bigr).
\)
Indeed, if $v=aub$, an occurrence at either boundary
satisfies one of the first two disjuncts. 
Use a separate fresh register for each occurrence,
with no copying requirements. Its value can therefore
be chosen independently at every transition.
Since infix atoms occur only positively, this replacement
preserves reachability, finite-run nonemptiness, and
B\"uchi nonemptiness.
\end{proof}

Thus the upper bound also covers full B\"uchi nonemptiness for the
infix-positive fragment considered in
\cite{MichaliszynOtopWieczorek2026}. The reconstruction may create extra
infixes; it does not preserve infix incomparability or negated infix atoms.

\begin{corollary}
\label{cor:prefix-suffix-ltl}
Satisfiability of LTL over finite binary words with prefix, suffix, equality,
and arbitrary explicitly written word constants is PSPACE-complete.
\end{corollary}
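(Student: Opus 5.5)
The plan is to reduce in both directions between prefix--suffix LTL and B\"uchi nonemptiness of the automata covered by Theorem~\ref{thm:pspace}, reusing the reconstruction of Lemma~\ref{lem:global-representation}.

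\emph{Upper bound.} Let $\varphi$ be an LTL formula over word terms $\mathbf X^j x$ and constants, with atoms built from $=$, $\pref$, $\suff$ (allowing $\wpref$, $\wsuff$ as disjunctions with equality). I will translate $\varphi$ directly, without first building an automaton, using the same recipe as in the proof of Theorem~\ref{thm:pspace}. The steps are as follows.
\begin{enumerate}
\item Push negations to atoms via~\eqref{eq:guard-normalisation}. This introduces $\perp_P$, $\perp_S$ and leaves only positive atoms over $\{=,\pref,\suff,\perp_P,\perp_S\}$. Temporal negations are pushed inward using the dual operators (release) as usual, so only the atomic level matters.
\item Replace each variable $x$ by a pair $(p_x,s_x)$, and each term $\mathbf X^jx$ by $(\mathbf X^jp_x,\mathbf X^js_x)$. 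Translate the atoms by the table $T$ and conjoin $\mathbf G\Psi_K$, where $K$ is built from the constants of $\varphi$. The result is a polynomial-size formula of $\operatorname{LTL}(\{0,1\}^*,\operatorname{clen})$.
\item Soundness. A model of $\varphi$ yields a model of the translation via $(w,w^{\rev})$.
\item Completeness. Given a model of the translation, apply the single injection $H_K$ of Lemma~\ref{lem:global-representation} pointwise. The one-way implications~\eqref{eq:representation-properties}, together with $H_K(k,k^{\rev})=k$ for constants and injectivity (for equality, equal pairs give equal words), make every true translated positive atom true in the reconstructed model.
\end{enumerate}

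The point I expect to need the most care is positivity. Since the implications only go one way, I must ensure that after normalisation every atom occurrence lies under no negation, including implicit negations inside temporal operators. Using the negation-normal form with $\mathbf U$ and its dual $\mathbf R$, the truth of a positive formula is monotone in the truth of its atoms. That monotonicity is exactly what lets the reconstruction preserve satisfaction. I would state this monotonicity as a short induction on the formula. Then the Demri--Deters theorem gives PSPACE.

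\emph{Lower bound.} PSPACE-hardness already holds for plain LTL with propositions. I can encode each proposition as $x_i=\varepsilon$, or as equality with a fixed constant, which uses only equality and constants. This gives hardness from propositional LTL satisfiability. Alternatively, I could cite the hardness part of Theorem~\ref{thm:pspace}, since automata are expressible in the logic.
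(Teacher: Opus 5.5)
Your proposal is correct and follows the paper's proof essentially step for step: negation normal form with release, normalisation of negated atoms by \eqref{eq:guard-normalisation}, translation by $T$ together with $\mathbf G\Psi_K$, forward translation via reversal, reverse translation via $H_K$ at every position, and an induction through $\land$, $\lor$, $\mathbf X$, $\mathbf U$ and release. The only difference is the lower bound: you reduce directly from propositional LTL by encoding propositions as equalities with constants, whereas the paper cites the hardness part of the Demri--Deters theorem for the prefix fragment; both arguments are valid.
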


\begin{proof}
Put the formula in negation normal form, using the release operator if
necessary, and replace negative data atoms
by~\eqref{eq:guard-normalisation}. Apply $T$ to the atoms and conjoin
$\mathsf G\Psi_K$; no state variable is needed. Forward translation uses
$(w,w^{\rev})$, and reverse translation uses $H_K$ at all positions.
The implication for atoms extends by induction through the positive Boolean
and temporal operators $\mathsf X$, $\mathsf U$, and release. The reduction
has polynomial size, so the same Demri--Deters theorem supplies the upper
bound. PSPACE-hardness already holds for the prefix fragment by that theorem.
\end{proof}

\bibliography{references}
\appendix
\section{Statement on the use of generative AI}
This work was developed through close collaboration between the author
and several generative AI models, primarily GPT-6.1 Sol,
GPT-6 Astra Pro, and GPT-6 Astra. Generative AI made substantive contributions
to both main proofs.

In the undecidability proof, generative AI proposed the definition
of the relation $U$ together with a proof substantially more complicated
than the one presented here. The author considerably simplified
the argument and revised its presentation to make it accessible
to human readers. Similarly, in the decidability proof, generative AI
proposed the key definition of the reconstruction function $H_K$ (the proof proposed by AI did not include constants, so it was just $H$)
and an initially very complicated proof. The author progressively
simplified this proof into its present form.

Generative AI was also used to assist with mathematical and linguistic
audits, language refinement, and editorial tasks. The author takes
full responsibility for the mathematical claims and the final text.
\end{document}